\documentclass{article}
\usepackage{Sub/spconf,amsmath,graphicx,hyperref}

\usepackage{balance}
\usepackage{epstopdf}

\usepackage{amsmath,amssymb,epsfig,cite}
\include{macros}
\usepackage{graphicx}
\usepackage{epstopdf}

\usepackage{acronym}

\usepackage{bm}

\usepackage{pifont}

\usepackage{array}

\usepackage{tikz}
\usetikzlibrary{calc}

\usepackage{tcolorbox}

\usepackage{accents}

\allowdisplaybreaks

\usepackage[caption=false,font=footnotesize]{subfig}

\usepackage{soul}

\usepackage{amsthm}

\usepackage{nicefrac}

\usepackage{lipsum,multicol}

\usepackage{algorithm}
\usepackage{algpseudocode}

\algdef{SE}[SUBALG]{Indent}{EndIndent}{}{\algorithmicend\ }%
\algtext*{Indent}
\algtext*{EndIndent}

\usepackage{enumitem}
\usepackage{mwe}    
\usepackage{epsfig}
\usepackage{color}
\usepackage{url}
\usepackage{mathtools,xparse}

\usepackage{gensymb}

\usepackage[multiple]{footmisc}

\usepackage{xr}

\makeatletter
\newcommand*{\addFileDependency}[1]{
  \typeout{(#1)}
  \@addtofilelist{#1}
  \IfFileExists{#1}{}{\typeout{No file #1.}}
}
\makeatother

\newcommand*{\myexternaldocument}[1]{%
    \externaldocument{#1}%
    \addFileDependency{#1.tex}%
    \addFileDependency{#1.aux}%
}

\myexternaldocument{supp}

\usepackage{xcolor}

\newcommand\absss[1]{\left|#1\right|}

\newcommand\abss[1]{\lvert#1\rvert}
\newcommand\norm[1]{\left\lVert #1\right\rVert}

\newcommand\normss[1]{\lVert #1\rVert}

\newcommand{\thn}[1]{ {#1^{\rm{th} } } }

\newcommand{\Eee}{\mathbb{E}}

\newcommand{\sigmaphi}{\sigma^2_{\phi}}

\newcommand{\rxii}{R_{\xi \xi}}
\newcommand{\deltat}{\Delta t}

\newcommand{\sigmaxi}{\sigma^2_{\xi}}

\newcommand{\Tsym}{ T_{\rm{sym}} }
\newcommand{\Ts}{ T_{\rm{s}} }
\newcommand{\Tcp}{ T_{\rm{cp}} }

\newcommand{\FF}{ \mathbf{F} }

\newcommand{\deltaf}{ \Delta f }

\newcommand{\fc}{ f_c }

\newcommand{\Nrx}{ R }

\newcommand{\ptx}{ \pp_{\rm{T}} }
\newcommand{\prxi}{ \pp^{i}_{\rm{R}} }
\newcommand{\distt}{ d_{\rm{T}} }
\newcommand{\distri}{ d^{i}_{\rm{R}} }
\newcommand{\gaint}{ G_{\rm{T}} }
\newcommand{\gainr}{ G_{\rm{R}} }

\newcommand{\alphab}{ \boldsymbol{\alpha}}
\newcommand{\gammab}{ \boldsymbol{\gamma}}

\newcommand{\taumin}{ \tau_{\rm{min}} }

\newcommand{\stilde}{ \widetilde{s} }

\newcommand{\ytilde}{ \widetilde{y} }

\newcommand{\mtN}{{\mathcal{N}}}

\newcommand{\fdb}{ f_{\rm{3dB}} }

\newcommand{\rect}[1]{ { \rm{rect} }\left(#1\right) }

\newcommand{\mub}{ \boldsymbol{\mu} }
\newcommand{\mubbar}{ \widebar{\boldsymbol{\mu}} }

\newcommand{\mtCN}{{\mathcal{CN}}}

\newcommand{\tracesmall}[1]{ {{{\rm{tr}}\left( #1 \right)}}  }

\newcommand{\realp}[1]{ \Re \left\{#1\right\}  }
\newcommand{\imp}[1]{ \Im \left\{#1\right\}  }

\newcommand{\Imatrix}{{ \boldsymbol{\mathrm{I}} }}

\newcommand{\bb}{ \mathbf{b} }

\newcommand{\boldzero}{{ {\boldsymbol{0}} }}
\newcommand{\boldone}{{ {\boldsymbol{1}} }}

\newcommand{\Pt}{{ P_{\rm{T}} }}

\newcommand{\sigmarcsi}{{ \sigma_{{\rm{rcs}},i} }}

\newcommand{\boldR}{ \mathbf{R} }

\newcommand{\yy}{ \mathbf{y} }

\newcommand{\pp}{ \mathbf{p} }
\newcommand{\xx}{ \mathbf{x} }

\newcommand{\ww}{ \mathbf{w} }
\newcommand{\zz}{ \mathbf{z} }

\newcommand{\uu}{ \mathbf{u} }

\newcommand{\bxi}{ \boldsymbol{\xi} }

\newcommand{\trp}{\mathsf{T}}
\newcommand{\herm}{\mathsf{H}}

\newcommand{\cset}[2]{ \mathbb{C}^{#1 \times #2}  }
\newcommand{\csett}{ \mathbb{C}  }
\newcommand{\rset}[2]{ \mathbb{R}^{#1 \times #2}  }
\newcommand{\rsett}{ \mathbb{R}  }

\newcommand{\alphabar}{ \widebar{\alpha} }
\newcommand{\ppbar}{ \widebar{\pp} }
\newcommand{\varthetabar}{ \widebar{\vartheta} }

\newcommand{\gammabar}{ \widebar{\gamma} }

\newcommand{\etab}{ {\boldsymbol{\eta}} }

\makeatletter \renewcommand\d[1]{\ensuremath{%
		\;\mathrm{d}#1\@ifnextchar\d{\!}{}}}
\makeatother

\makeatletter
\newcommand*\rel@kern[1]{\kern#1\dimexpr\macc@kerna}
\newcommand*\widebar[1]{%
  \begingroup
  \def\mathaccent##1##2{%
    \rel@kern{0.8}%
    \overline{\rel@kern{-0.8}\macc@nucleus\rel@kern{0.2}}%
    \rel@kern{-0.2}%
  }%
  \macc@depth\@ne
  \let\math@bgroup\@empty \let\math@egroup\macc@set@skewchar
  \mathsurround\z@ \frozen@everymath{\mathgroup\macc@group\relax}%
  \macc@set@skewchar\relax
  \let\mathaccentV\macc@nested@a
  \macc@nested@a\relax111{#1}%
  \endgroup
}
\makeatother

\theoremstyle{remark}

\newtheoremstyle{mytheoremstyle} 
    {\topsep}                    
    {\topsep}                    
    {\upshape}                   
    {.5em}                           
    {\itshape}                   
    {.}                          
    {.5em}                       
    {}  

\theoremstyle{plain}

\newtheoremstyle{iremark}
  {\topsep}   
  {\topsep}   
  {\upshape}  
  {0.2in}       
  {\itshape}  
  {.}         
  {5pt plus 1pt minus 1pt} 
  {\thmname{#1}\thmnumber{ \itshape#2}\thmnote{ (#3)}} 

\newtheorem{proposition}{Proposition}

\theoremstyle{definition}
\newtheorem*{remark}{Remark}
\acrodef{RIS}{reconfigurable intelligent surface}
\acrodef{SNR}{signal-to-noise ratio}
\acrodef{ISAC}{integrated sensing and communication}
\acrodef{ISLAC}{integrated sensing, localization, and communication}
\acrodef{LOS}{line-of-sight}
\acrodef{NLOS}{non-line-of-sight}
\acrodef{AOA}{angle-of-arrival}
\acrodef{AOD}{angle-of-departure}
\acrodef{UE}{user equipment}
\acrodef{NF}{near-field}
\acrodef{BS}{base station}
\acrodef{AP}{access point}
\acrodef{AF}{ambiguity function}
\acrodef{MCRB}{misspecified Cram\'{e}r-Rao bound}
\acrodef{CRB}{Cram\'{e}r-Rao bound}
\acrodef{LB}{lower bound}
\acrodef{ML}{maximum-likelihood}
\acrodef{MML}{mismatched maximum-likelihood}
\acrodef{DL}{downlink}
\acrodef{UL}{uplink}
\acrodef{VRU}{vulnerable road user}
\acrodef{MIMO}{multiple-input multiple-output}
\acrodef{MISO}{multiple-input single-output}
\acrodef{SISO}{single-input single-output}
\acrodef{SIP}{shift invariance property}
\acrodef{FIM}{Fisher information matrix}
\acrodef{RMSE}{root mean-squared error}
\acrodef{AWGN}{additive white Gaussian noise}
\acrodef{ADMM}{alternating direction method of multipliers}
\acrodef{LS}{least-squares}
\acrodef{SOC}{second-order cone}
\acrodef{CFO}{carrier frequency offset}
\acrodef{i.i.d.}{independently and identically distributed}
\acrodef{MI}{mutual information}
\acrodef{SAC}[S\&C]{sensing and communication}
\acrodef{OTFS}{orthogonal time frequency space}

\acrodef{ULA}{uniform linear array}
\acrodef{PSLR}{peak-to-sidelobe level ratio}

\acrodef{LO}{local oscillator}
\acrodef{PN}{phase noise}
\acrodef{CP}{cyclic prefix}
\acrodef{ISI}{inter-symbol interference}

\acrodef{FRO}{free-running oscillator}
\acrodef{PLL}{phase-locked loop}
\acrodef{PDF}{probability density function}
\acrodef{RCS}{radar cross section}
\acrodef{KL}{Kullback-Leibler}

\acrodef{PSD}{power spectral density}
\acrodef{TX}{transmit}
\acrodef{RX}{receive}

\acrodef{CDF}{cumulative distribution function}
\acrodef{SQP}{sequential quadratic programming}

\acrodef{MF}{matched filtering}
\acrodef{RF}{reciprocal filtering}

\acrodef{LMMSE}{linear minimum-mean-squared error}

\allowdisplaybreaks

\usepackage{etoolbox}
\makeatletter
\AtBeginDocument{%
  \patchcmd{\@makecaption}{\vskip 10pt}{\vskip 3pt}{}{}%
}
\renewcommand\section{\@startsection{section}{1}{\z@}%
  {-1.5ex plus -.3ex minus -.2ex}{.7ex plus .2ex}{\normalfont\Large\bfseries}}
\renewcommand\subsection{\@startsection{subsection}{2}{\z@}%
  {-1.2ex plus -.3ex minus -.2ex}{.5ex plus .2ex}{\normalfont\large\bfseries}}
\apptocmd{\thebibliography}{\footnotesize\setlength{\itemsep}{0pt}\setlength{\parsep}{0pt}\setlength{\baselineskip}{9.7pt}}{}{}
\makeatother
\title{Phase-Coherent Multistatic Sensing under Phase Noise \\ in Distributed ISAC}

\name{%
\shortstack[c]{%
Musa Furkan Keskin$^{\ast}$,
Silan Karadag$^{\ast,\dagger}$,
Navid Amani$^{\ast}$,\\
Nikita Lyamin$^{\star}$,
Annelie Wyholt$^{\dagger}$,
Henk Wymeersch$^{\ast}$}%
\thanks{This work, conducted within the Advanced Digitalization
program at the WiTECH Centre DisCouRSe, has been financed by VINNOVA, and, in part, by the Swedish Research Council (VR) through the project 6G-PERCEF under Grant 2024-04390.}%
}

\address{$^{\ast}$ \normalsize Department of Electrical Engineering, Chalmers University of Technology, Sweden \\ $^{\dagger}$ \normalsize Volvo Car Corporation, Gothenburg, Sweden,
 $^{\star}$ \normalsize Waveium AB, Malmö, Sweden}

\begin{document}
\ninept
\setlength{\abovedisplayskip}{3pt plus 1pt minus 1pt}
\setlength{\belowdisplayskip}{3pt plus 1pt minus 1pt}
\setlength{\abovedisplayshortskip}{1pt plus 1pt}
\setlength{\belowdisplayshortskip}{2pt plus 1pt minus 1pt}
\setlength{\jot}{2pt}
%
\maketitle
\begin{abstract}
We investigate phase-coherent multistatic sensing in distributed ISAC systems where spatially separated access points (APs) share a common, non-ideal local oscillator (LO) impaired by phase noise (PN). In the presence of a single transmit AP and multiple receive APs with OFDM transmission, we derive the joint statistics of PN across receive APs and reveal its unique covariance structure that depends on target location through propagation delays of individual bistatic links. Based on this statistical characterization, we carry out a \ac{CRB} analysis of three receiver models, namely PN-free, PN-aware and PN-ignorant, for both isotropic and non-isotropic targets. Simulation results verify the location-dependent statistics of PN in shared-LO distributed architectures while demonstrating how transmit power, LO quality, AP deployment and scattering characteristics affect multistatic localization.

\end{abstract}
\begin{keywords}
Multistatic sensing, distributed ISAC, phase noise, OFDM, phase-coherent sensing, isotropic scattering.
\end{keywords}
%

\section{Introduction}
\label{sec_intro}
Distributed \ac{ISAC} extends environmental perception from individual transceivers to a network of cooperating \acp{AP} \cite{collaborative_PMN_WCM_2023,cooperative_ISAC_WCM_2025}. Spatially separated \acp{AP} observe targets from complementary viewpoints, providing geometric diversity and broader sensing coverage \cite{cooperative_ISAC_WCM_2025}. By fusing sensing information at different levels, ranging from raw measurements to local estimates, distributed \ac{ISAC} can improve localization accuracy in multistatic configurations relative to conventional monostatic sensing \cite{DISAC_2025_Kawon}. By leveraging the existing communication infrastructure, it can thus overcome key limitations of single-node sensing and support the demanding sensing requirements envisioned for 6G applications.

In phase-synchronized distributed \ac{ISAC}, further gains can be achieved through phase-coherent processing, which preserves the carrier phase relationships across \acp{AP} and allows their spatial separation to form an extremely large effective sensing aperture \cite{MIMO_radar_2008,multistatic_network_Dario_2024}. This becomes particularly attractive in bandwidth-limited sub-6 GHz deployments, where individual \acp{AP} may be equipped with only a small number of antennas. Carrier phase information has been exploited for uplink localization with distributed arrays \cite{vukmirovic2019direct,fascista2024joint} and for multistatic sensing \cite{phase_coherent_DMIMO_ISAC_2025,silan_SPAWC_2026}, demonstrating its potential for high accuracy and resolution. A key practical requirement to realize these gains is to maintain inter-AP phase coherence. 

Such coherence requires either a shared \ac{LO} \cite{CF_MIMO_PN_2021,wu2024uplink} or accurate synchronization and calibration of independently generated signals \cite{massiveSync_Erik_TSP_2024}. In both architectures, \ac{PN}, resulting from \ac{LO} imperfections, can significantly degrade communication performance due to its rapidly time-varying nature \cite{wu2024uplink,ganesan2023beamsync,ngo2026distributed}. However, its implications for distributed sensing remain relatively underexplored. In standalone radar systems, the impact of \ac{PN} has been studied for shared- and independent-\ac{LO} multistatic architectures using FMCW and PMCW waveforms \cite{Duerr_multistatic_PN_JMW_2022,distributed_PMCW_PN_2024}. For OFDM-based ISAC, prior work has characterized \ac{PN}-induced performance loss and delay-dependent \ac{PN} statistics in \textit{monostatic} sensing \cite{PN_OFDM_ISAC_RadarConf_2023,OFDM_PN_TSP_Exploitation_2023,de2024sensing_PN_OFDM}, while \textit{bistatic} studies have considered hybrid monostatic-bistatic processing \cite{Kawon_Sync_PN_DISAC_RadarConf25}, clutter-assisted \ac{PN} mitigation \cite{idrees2026phase} and sensing privacy \cite{PN_Sensing_Privacy_ISAC_2026}. 
However, the existing literature lacks a unified characterization of \ac{PN} in shared-\ac{LO} \textit{multistatic} architectures for OFDM-based distributed \ac{ISAC} systems. In such a setting, \ac{PN} exhibits a distinctive statistical property arising from the fact that each sensing receiver observes the same oscillator process with a link-specific delay. As a result, the differential/self-referenced \ac{PN} \cite{Demir_PN_2006,PN_2006} across receivers becomes correlated through their respective bistatic propagation delays (similar to the range correlation effect in monostatic sensing \cite{Range_Correlation_93}), which implies that the joint space-time \ac{PN} statistics depend on the unknown target location via multiple bistatic delays. Despite recent efforts on the analysis of \ac{PN} in monostatic and bistatic sensing \cite{PN_OFDM_ISAC_RadarConf_2023,OFDM_PN_TSP_Exploitation_2023,de2024sensing_PN_OFDM,Kawon_Sync_PN_DISAC_RadarConf25,idrees2026phase}, a rigorous theoretical characterization of this unique, \textit{location-dependent} statistical structure as well as its impact on phase-coherent multistatic location estimation have remained unexplored.

To address this gap, we investigate OFDM-based phase-coherent multistatic sensing in a distributed \ac{ISAC} system with one transmit \ac{AP} and multiple receive \acp{AP} sharing a common, non-ideal \ac{LO} impaired by \ac{PN}. Our contributions are twofold. First, under a Wiener \ac{PN} model, we derive the joint statistics of the differential \ac{PN} samples across time and receive \acp{AP}, yielding a target-location-dependent covariance matrix. Second, building on this characterization, we derive deterministic, hybrid and misspecified \acp{CRB} for PN-free, PN-aware and PN-ignorant localization, respectively, considering both isotropic and non-isotropic targets. The proposed framework enables a systematic investigation of how \ac{PN} impacts multistatic localization across different SNRs, \ac{LO} qualities, target locations and scattering characteristics in shared-\ac{LO} architectures.

\section{System Model}
\subsection{Scenario Description}
We consider a distributed ISAC system consisting of a \ac{TX} AP and $\Nrx$ sensing \ac{RX} APs, all equipped with a single antenna. The APs share a common \ac{LO} \cite{CF_MIMO_PN_2021,wu2024uplink} and therefore are time, frequency and phase synchronized. Due to hardware imperfections, the \ac{LO} is assumed to be non-ideal and impaired by \ac{PN} \cite{OFDM_PN_TSP_Exploitation_2023,wu2024uplink}, leading to the \ac{PN} process $\phi(t)$. We assume the presence of a single point target in the environment. In a 2D global coordinate system, the TX AP and the $\thn{i}$ RX AP have known and fixed locations denoted by $\ptx$ and $\prxi$ while the unknown target location is denoted by $\pp$. The TX AP sends an OFDM signal for downlink communications while the RX APs collect and jointly process the target echoes for phase-coherent multistatic sensing \cite{phase_coherent_DMIMO_ISAC_2025}. 

\subsection{Transmit Signal Model}
The TX AP transmits an OFDM symbol over $N$ subcarriers, with subcarrier spacing $\deltaf=1/T$, useful duration $T$, \ac{CP} duration $\Tcp$ and total duration $\Tsym=\Tcp+T$ \cite{RadCom_Proc_IEEE_2011}. The complex baseband transmit signal is given by $s(t)=(1/\sqrt{N})\sum_{n=0}^{N-1}x_n e^{j2\pi n\deltaf t}\rect{t/\Tsym}$, where $x_n$ denotes the complex data symbol on subcarrier $n$, known to the RX APs. In the presence of PN, the upconverted transmit signal is \cite{PN_2006,OFDM_PN_TSP_Exploitation_2023}
\begin{equation}\label{eq_passband_st}
\stilde(t) = \Re \left\{  s(t) e^{j \left[ 2 \pi \fc t + \phi(t)\right]} \right\} \,,
\end{equation} 
where $\fc$ denotes the carrier frequency.

\subsection{Receive Signal Model at Sensing RX APs}\label{sec_radar_rec}
The signal originating from the TX and arriving at RX $i$ after being scattered off the target experiences a bistatic delay $\tau_i(\pp) = (\norm{\ptx - \pp} + \norm{\prxi - \pp})/c$, where $c$ denotes the speed of propagation. Then, given the transmit signal in \eqref{eq_passband_st}, the passband target echo at RX $i$ is \cite{OFDM_PN_TSP_Exploitation_2023}
\begin{align}\nonumber
    \ytilde_i(t) = \Re \left\{ \alpha_i \,  e^{j \vartheta_i} s(t-\tau_i(\pp)) e^{j \left[ 2 \pi \fc (t-\tau_i(\pp)) + \phi(t-\tau_i(\pp))\right]} \right\} \,,
\end{align}
where $\alpha_i \in \rsett$ represents the amplitude of the combined channel gain of the two-segment path, including the effect of transmit power, \ac{RCS} and the overall path loss \cite[Eq.~(7)]{NF_Coh_Sens_2022}, \cite[Eq.~(11)]{Li_24_ISAC_Toward_Seamless_Sensing}, and $\vartheta_i$ denotes the scattering-induced phase shift. Downconverting the passband signal through the shared, noisy \ac{LO}, i.e., multiplication by $e^{-j (2 \pi \fc t + \phi(t))}$ \cite{PN_SI_TSP_2017,OFDM_PN_TSP_Exploitation_2023}, yields the complex baseband signal 
\begin{align}\label{eq_rec_baseband}
    y_i(t)  
    &= \alpha_i \, e^{j \vartheta_i} s(t-\tau_i(\pp)) e^{-j 2 \pi \fc \tau_i(\pp)}   w_i(t, \pp) \,,
\end{align} 
where the multiplicative \ac{PN} process is denoted by $w_i(t, \pp) = e^{-j\xi(t, \tau_i(\pp))}$ with 
\begin{align}\label{eq_xi_p}
    \xi(t, \tau) \triangleq \phi(t) - \phi(t-\tau) \,.
\end{align}

For \ac{CP} removal in \eqref{eq_rec_baseband}, we consider the samples at $t_{\ell}=t_0+\ell T/N$ for $\ell=0,\ldots,N-1$, where $t_0=\taumin+\Tcp$ and $\taumin \triangleq \min_i \normss{\ptx-\prxi}/c$ denotes the earliest possible bistatic propagation time. To guarantee orthogonality of subcarriers, 
all bistatic delays of interest are assumed to lie within one CP of the earliest arrival, i.e., $0\leq \tau_i(\pp)-\taumin\leq \Tcp$ for all $i$ \cite{RadCom_Proc_IEEE_2011,navid_WCNC_2026}. Accordingly, the vector observation at RX $i$ is given by \cite{OFDM_PN_TSP_Exploitation_2023}
\begin{align}\label{eq_rx_vector_i}
    \yy_i
    =
    \alpha_i e^{j\vartheta_i}
    e^{-j2\pi f_c\tau_i(\pp)}
    \ww_i \odot \FF_N^\herm
    \big[\xx\odot\bb(\tau_i(\pp))\big]
    +\zz_i \,,    
\end{align}
where $\yy_i\in\cset{N}{1}$, $\xx \triangleq [x_0,\ldots,x_{N-1}]^{\trp}$, $\bb(\tau) \in\cset{N}{1}$ with $[\bb(\tau)]_n \triangleq e^{-j2\pi n\deltaf\tau}$, $\ww_i \in\cset{N}{1}$ contains the time-domain PN samples with $ [\ww_i]_\ell \triangleq  e^{-j\xi(t_\ell, \tau_i(\pp))}$, $\FF_N \in\cset{N}{N}$ is the unitary DFT matrix and $\zz_i \sim\mtCN(\boldzero,\sigma^2\Imatrix)$ is \ac{AWGN}. In \eqref{eq_rx_vector_i}, we distinguish between \textit{isotropic} and \textit{non-isotropic} scattering, leading to different observation models.
\begin{itemize}
    \item \textbf{Isotropic Scattering:} In this case, $\vartheta_i = \vartheta$, which is a common assumption in multistatic sensing \cite{NF_Coh_Sens_2022,Isotropic_2006,multistatic_NF_ISAC_2024,multistatic_network_Dario_2024,MIMO_radar_2008,darioCoherentMovingTarget_2025} and supported by experimental evidence \cite{SAR_isotropic_2023}. Then, \eqref{eq_rx_vector_i} becomes 
    \begin{align}\label{eq_rx_vector_i_iso}
    \yy_i    =    \alpha_i e^{j\vartheta}
    e^{-j2\pi f_c\tau_i(\pp)}
    \ww_i \odot \uu_i(\pp)
    +\zz_i \,,    
\end{align}
where $\uu_i(\pp) \triangleq \FF_N^\herm
    \big[\xx\odot\bb(\tau_i(\pp))\big] \in \cset{N}{1}$. In \eqref{eq_rx_vector_i_iso}, the carrier phase $2\pi f_c\tau_i(\pp)$ provides location information (on top of frequency-domain phase shifts in $\bb(\tau_i(\pp))$) via joint processing of $\{\yy_i\}_{i=1}^{\Nrx}$ since $\vartheta$ is common across the RX APs. 

    \item \textbf{Non-isotropic Scattering:} In this case, $\vartheta_i$ is an arbitrary unknown phase shift specific to each RX and therefore can be absorbed into an unknown complex channel gain $\gamma_i = \alpha_i e^{j\vartheta_i}
    e^{-j2\pi f_c\tau_i(\pp)} \in \csett$ since joint processing cannot separate the carrier phases from the scattering-induced phase shifts. Then,  \eqref{eq_rx_vector_i} can be reparameterized into 
    \begin{align}\label{eq_rx_vector_i_non_iso}
    \yy_i    =    \gamma_i
    \ww_i \odot \uu_i(\pp)
    +\zz_i \,,
\end{align}
where, unlike \eqref{eq_rx_vector_i_iso}, $\pp$ can only be estimated through $\bb(\tau_i(\pp))$.
\end{itemize}

\subsection{Problem Formulation}\label{sec_prob_form}
Given the observations $\{\yy_i\}_{i=1}^{\Nrx}$ in \eqref{eq_rx_vector_i_iso} or \eqref{eq_rx_vector_i_non_iso} collected from $\Nrx$ sensing RX APs, the goal is to estimate the target location $\pp$ under PN in the common \ac{LO}, which manifests itself through the PN vectors $\{\ww_i\}_{i=1}^{\Nrx}$. As seen from \eqref{eq_rec_baseband} and \eqref{eq_xi_p},
the joint statistics of $\{\ww_i\}_{i=1}^{\Nrx}$ depend on $\{\tau_i(\pp)\}_{i=1}^{\Nrx}$, which are coupled through the common unknown parameter $\pp$. Hence, the phase-coherent multistatic sensing problem involves dealing with \textit{location-dependent PN statistics}, which are \textit{a-priori} unknown, complicating the estimation process. In Sec.~\ref{sec_pn}, we derive the \ac{PN} statistics, followed by a \ac{CRB} analysis of three receiver models (PN-free, PN-aware and PN-ignorant) in Sec.~\ref{sec_bounds} to evaluate the impact of \ac{PN} on multistatic localization accuracy for both isotropic and non-isotropic targets.

\section{Phase Noise Statistics}\label{sec_pn}
In this section, we characterize the statistics of $\bxi = [\bxi_1^\trp \, \ldots \, \bxi_\Nrx^\trp]^\trp \in \rset{N\Nrx}{1}$, where $\bxi_i \triangleq [\xi(t_0, \tau_i(\pp)), \ldots, \xi(t_{N-1}, \tau_i(\pp))]^\trp$. According to \eqref{eq_rec_baseband}--\eqref{eq_rx_vector_i}, we have $\ww = e^{-j \bxi}$, where $\ww \triangleq [\ww_1^\trp \, \ldots \, \ww_\Nrx^\trp]^\trp$. 

\subsection{Statistics of Differential Phase Noise Process}
We adopt the \ac{FRO} model for the \ac{LO}, leading to the Wiener \ac{PN} process $\phi(t) \sim \mtN(0, \sigmaphi(t)) $, where $\sigmaphi(t) = 4 \pi \fdb t$ and $\fdb$ denotes the $3 \, \rm{dB}$ bandwidth of the Lorentzian oscillator spectrum \cite{Demir_PN_2000,PN_OFDM_PLL_TCOM_2007}.
Then, the differential/self-correlated PN process in \eqref{eq_xi_p} becomes a stationary process with \textit{delay-dependent} statistics \cite{Demir_PN_2006,PN_2006,OFDM_PN_TSP_Exploitation_2023}
\begin{equation}\label{eq_dpn_stat}
\xi(t, \tau) \sim \mtN(0, \sigmaxi(\tau)) \,,
\end{equation}
where $\sigmaxi(\tau) =  4 \pi \fdb \abss{\tau}$ is the delay-dependent variance.
We provide the following result regarding the statistics of $\xi(t, \tau)$.

\begin{proposition}\label{lemma_pn_stat_general}
    The correlation function of $\xi(t, \tau)$ in \eqref{eq_dpn_stat} across different time instances and delays, with $\deltat \triangleq t_1 - t_2$, is given by
    \begin{align} \label{eq_xixi}
        &\rxii(\deltat, \tau_i, \tau_j) \triangleq    \Eee\big[ \xi(t_1,\tau_i)\xi(t_2,\tau_j)\big]
        \\ \nonumber
    =&\tfrac{1}{2}\big[\sigmaxi(\Delta+\tau_j)+\sigmaxi(\Delta-\tau_i)-\sigmaxi(\Delta)-\sigmaxi(\Delta+\tau_j-\tau_i)\big] \,.
    \end{align}
\end{proposition}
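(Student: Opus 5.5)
We expand the product using the definition of $\xi(t,\tau)$ in \eqref{eq_xi_p} and reduce everything to Wiener increments. Throughout, $\Delta$ in \eqref{eq_xixi} stands for $\deltat = t_1 - t_2$. Writing $\xi(t_1,\tau_i) = \phi(t_1) - \phi(t_1-\tau_i)$ and $\xi(t_2,\tau_j) = \phi(t_2) - \phi(t_2-\tau_j)$, the key observation is that for any four time instants $a,b,c,d$, the increments of a Wiener process satisfy the polarization-type identity
\begin{align*}
\Eee\big[(\phi(a)-\phi(b))(\phi(c)-\phi(d))\big] = \tfrac{1}{2}\big[ v(a-d) + v(b-c) - v(a-c) - v(b-d) \big],
\end{align*}
where $v(s) \triangleq \Eee[(\phi(s_1)-\phi(s_2))^2]$ for $s=s_1-s_2$. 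Since $\phi$ has independent Gaussian increments with variance $4\pi\fdb\abss{s}$, we have $v(s) = 4\pi\fdb\abss{s} = \sigmaxi(s)$, consistent with \eqref{eq_dpn_stat}.

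To verify the identity, we start from $xy = \tfrac{1}{2}[(x^2 + y^2) - (x-y)^2]$-type manipulations. Concretely, we use $2(A-B)(C-D) = (A-D)^2 + (B-C)^2 - (A-C)^2 - (B-D)^2$, which holds for arbitrary reals, since both sides expand to $2AC - 2AD - 2BC + 2BD$. Taking expectations gives the identity, and this step requires only that increments have variance $\sigmaxi(\cdot)$ of their lag. That property holds for the Wiener model: by independence of increments, $\Eee[\phi(s_1)\phi(s_2)] = 4\pi\fdb\min(s_1,s_2)$ for nonnegative times, so $\Eee[(\phi(s_1)-\phi(s_2))^2] = 4\pi\fdb\abss{s_1-s_2}$. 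Note that the sampled times $t_\ell - \tau_i \ge t_0 - \tau_i \ge 0$ by the CP assumption, so all arguments are valid times.

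Substituting $a = t_1$, $b = t_1 - \tau_i$, $c = t_2$ and $d = t_2 - \tau_j$ gives the four lags
\begin{align*}
a-d = \deltat + \tau_j, \quad b-c = \deltat - \tau_i, \quad a-c = \deltat, \quad b-d = \deltat + \tau_j - \tau_i.
\end{align*}
These yield exactly the four terms of \eqref{eq_xixi} with the stated signs, and the dependence on $(t_1,t_2)$ only through $\deltat$ confirms joint stationarity.

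The only delicate point is bookkeeping the signs and absolute values, i.e., ensuring that $\sigmaxi$ is applied to $\abss{\cdot}$ of possibly negative lags. This is automatic because $\sigmaxi(\tau) = 4\pi\fdb\abss{\tau}$ is even, so no case split on the ordering of the four time instants is needed. A direct expansion into four $\min(\cdot,\cdot)$ terms would require such a case analysis, which the variance-of-increments identity avoids.
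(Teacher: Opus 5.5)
Your proof is correct and follows essentially the same route as the paper: substitute $\xi(t,\tau)=\phi(t)-\phi(t-\tau)$ and the Wiener second-order statistics (Eq.~(S4) of the cited monostatic reference) into $\mathbb{E}[\xi(t_1,\tau_i)\xi(t_2,\tau_j)]$ and expand the four cross terms. Your identity $2(A-B)(C-D)=(A-D)^2+(B-C)^2-(A-C)^2-(B-D)^2$ organizes that expansion so it lands directly on increment variances, and the lags $\Delta t+\tau_j$, $\Delta t-\tau_i$, $\Delta t$, $\Delta t+\tau_j-\tau_i$ and their signs all check out.
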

\begin{proof}
    It follows by inserting \cite[Eq.~(S4)]{OFDM_PN_TSP_Exploitation_2023} and \eqref{eq_xi_p} into \eqref{eq_xixi}.
\end{proof}

\subsection{Location-Dependent \ac{PN} Covariance}
Based on Prop.~\ref{lemma_pn_stat_general}, the following result provides the statistics of $\bxi$.
\begin{proposition}\label{prop_xi_stat}
    The discrete-time version $\bxi \in \rset{N\Nrx}{1}$ of the \ac{PN} process $\xi(t,\tau)$ in \eqref{eq_xi_p}, which contains $N$ time-domain \ac{PN} samples across $\Nrx$ RXs, can be statistically characterized as
    \begin{align} \label{eq_pn_cov}
        \bxi\sim \mtN(\boldzero, \boldR(\pp)) \,,
        \vspace{-0.2in}
    \end{align}    
    where $\boldR(\pp) \in \rset{N\Nrx}{N\Nrx}$ is the location-dependent covariance matrix of $\bxi$. The $\thn{(m_1,m_2)}$ entry of $\boldR(\pp)$ is given by
\begin{align}\label{eq_rtau_entries}
    \left[ \boldR(\pp) \right]_{m_1,m_2} = \rxii((n_1-n_2) \Ts, \tau_{i_1}(\pp), \tau_{i_2}(\pp) ) ~,
\end{align}
where $(m_1, m_2) = (n_1 + (i_1-1) N, n_2 + (i_2-1) N)$, with $0 \leq n_1, n_2 \leq N-1$ and $1 \leq i_1, i_2 \leq \Nrx$ denoting the time-domain and RX indices, respectively, and $\Ts = T/N$ the sampling interval.
\end{proposition}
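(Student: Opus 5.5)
The plan is to show two things. First, $\bxi$ is a zero-mean jointly Gaussian vector. Second, its second-order moments are exactly the values of $\rxii$ from Prop.~\ref{lemma_pn_stat_general} evaluated at the sampling grid. Throughout, the target location $\pp$, and hence every delay $\tau_i(\pp)$, is treated as a fixed (deterministic) parameter. All randomness therefore comes from the Wiener process $\phi(t)$, and $\boldR(\pp)$ is a conditional covariance given $\pp$.

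\textbf{Gaussianity and zero mean.} Each entry of $\bxi$ has the form $\xi(t_n,\tau_i(\pp)) = \phi(t_n) - \phi(t_n-\tau_i(\pp))$ by \eqref{eq_xi_p}. Hence $\bxi = \mathbf{A}\,\boldsymbol{\phi}$, where $\boldsymbol{\phi}$ collects the finitely many samples $\{\phi(t_n),\phi(t_n-\tau_i(\pp))\}_{n,i}$. The matrix $\mathbf{A}$ is deterministic, has entries in $\{0,\pm1\}$, and has exactly one $+1$ and one $-1$ per row. A Wiener process is a Gaussian process, so any finite collection of its samples is jointly Gaussian. This holds even when some sampling instants coincide, which gives a degenerate but still Gaussian vector. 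A linear map of a jointly Gaussian vector is jointly Gaussian, so $\bxi$ is Gaussian. Linearity of expectation together with $\Eee[\phi(t)]=0$ then gives $\Eee[\bxi]=\boldzero$.

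\textbf{Covariance entries.} With the stacking $m = n + (i-1)N$, the $\thn{(m_1,m_2)}$ entry of $\boldR(\pp)$ is
\begin{equation*}
\Eee\big[\xi(t_{n_1},\tau_{i_1}(\pp))\,\xi(t_{n_2},\tau_{i_2}(\pp))\big].
\end{equation*}
Since $t_{n_1}-t_{n_2}=(n_1-n_2)\Ts$ with $\Ts=T/N$, this is precisely $\rxii((n_1-n_2)\Ts,\tau_{i_1}(\pp),\tau_{i_2}(\pp))$ in the notation of Prop.~\ref{lemma_pn_stat_general}, which yields \eqref{eq_rtau_entries}.

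\textbf{The point requiring care.} The expression in Prop.~\ref{lemma_pn_stat_general} depends only on $\deltat$, not on the absolute times $t_1$ and $t_2$. This needs the increments of $\phi$ to be stationary, so that the covariance of two increments depends only on the relative positions of their endpoints. For the Wiener model this can be verified directly. Expanding the product into four terms of the form $\Eee[\phi(a)\phi(b)]=4\pi\fdb\min(a,b)$ and using the identity $\min(a,b)=\tfrac12(a+b-|a-b|)$, all the absolute-time terms cancel. What remains involves only $|\cdot|$ of differences, which reproduces the $\sigmaxi(\cdot)$ terms in \eqref{eq_xixi}.

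This cancellation requires all arguments to be nonnegative, i.e., $t_n-\tau_i(\pp)\ge 0$. That condition holds because $t_0=\taumin+\Tcp$ and $\tau_i(\pp)\le\taumin+\Tcp$ under the CP assumption. Consequently, $\boldR(\pp)$ depends on $\pp$ only through $\{\tau_i(\pp)\}$, which establishes the claimed location dependence.
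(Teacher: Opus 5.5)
Your proposal is correct and takes essentially the same route as the paper, whose proof simply says the result follows from Prop.~\ref{lemma_pn_stat_general}, i.e., by evaluating $\rxii$ at the sampling instants $t_n$ and the delays $\tau_i(\pp)$. You also spell out what the paper leaves implicit: joint Gaussianity of $\bxi$ as a linear image of finitely many Wiener samples, and the check that $t_n-\tau_i(\pp)\ge 0$ (from $t_0=\taumin+\Tcp$ and the CP condition), which makes the $\min(a,b)$ Wiener covariance and hence the form in \eqref{eq_xixi} valid.
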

\begin{proof}
    It follows from Prop.~\ref{lemma_pn_stat_general}.
\end{proof}

\begin{remark}
    Prop.~\ref{prop_xi_stat} establishes the location-dependence of the \ac{PN} statistics in a shared-oscillator multistatic sensing setup. This result generalizes the delay-dependence of the \ac{PN} covariance in monostatic sensing \cite{OFDM_PN_TSP_Exploitation_2023} to multistatic sensing where multiple bistatic delays $\{\tau_i(\pp)\}_{i=1}^{\Nrx}$ induce dependence on $\pp$. To illustrate the location-dependence property, Fig.~\ref{fig_pn_var_loc} plots the mean accumulated \ac{PN} variance \cite{PN_Sensing_Privacy_ISAC_2026} over $N$ samples and $R$ receivers, corresponding to $1/(N \Nrx)\tracesmall{\boldR(\pp)}$, as a function of target location, showing substantial variations in \ac{PN} variance with changing location. 
\end{remark}

\begin{figure}[t]
	\centering
	\includegraphics[width=0.9\linewidth]{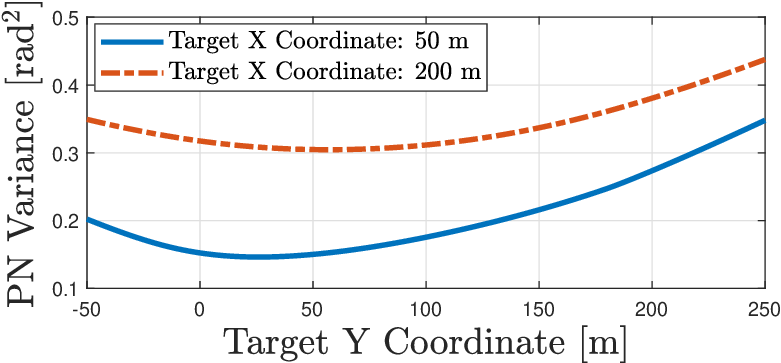}
	\vspace{0pt}
	\caption{\footnotesize Mean accumulated PN variance with respect to target location for $\fdb = 20 \, \rm{kHz}$ and the AP deployment in Fig.~\ref{fig_dep}.}
	\label{fig_pn_var_loc}
\end{figure}

\section{Localization Performance Bounds}\label{sec_bounds}
In this section, based on \eqref{eq_rx_vector_i_iso} and \eqref{eq_rx_vector_i_non_iso}, we derive the theoretical bounds on location estimation with and without PN, employing the \ac{CRB} and \ac{MCRB} tools from \cite{Fortunati2017,OFDM_PN_TSP_Exploitation_2023,PN_OFDM_ISAC_RadarConf_2023}.

\subsection{PN-Free Bound: Deterministic \ac{CRB}}
In this case, we assume an ideal \ac{LO} (i.e., without \ac{PN}, meaning $\ww_i = \boldone$ in \eqref{eq_rx_vector_i_iso} and \eqref{eq_rx_vector_i_non_iso}) to establish an ideal, PN-free benchmark for localization performance.
For \textit{isotropic} scattering, the signal model in \eqref{eq_rx_vector_i_iso} becomes
\begin{align}\label{eq_rx_vector_i_iso_crb}
    \yy_i    =    \alpha_i e^{j\vartheta}
    e^{-j2\pi f_c\tau_i(\pp)}
     \uu_i(\pp)
    +\zz_i \,,    
\end{align}
with the unknown parameters $\etab = \left[ \pp^\trp, \alphab^\trp, \vartheta \right]^\trp \in \rset{(R+3)}{1} $, where $\alphab = [\alpha_1, \ldots, \alpha_\Nrx]^\trp$. For \textit{non-isotropic} scattering, \eqref{eq_rx_vector_i_non_iso} becomes
\begin{align}\label{eq_rx_vector_i_non_iso_crb}
    \yy_i    =    \gamma_i
    \uu_i(\pp)
    +\zz_i \,,
\end{align}
with the unknown parameters $\etab = \big[ \pp^\trp, \gammab^\trp \big]^\trp \in \rset{(2R+2)}{1} $, where $\gammab = [\realp{\gamma_1}, \imp{\gamma_1}, \ldots, \realp{\gamma_\Nrx}, \imp{\gamma_\Nrx}]^\trp $. Following the same steps as in \cite[Sec.~III-A]{PN_OFDM_ISAC_RadarConf_2023}, the CRB on $\pp$ can be derived for both isotropic and non-isotropic targets.

\subsection{PN-Aware Bound: Hybrid \ac{CRB}}
In this case, we consider the generative models in \eqref{eq_rx_vector_i_iso} and \eqref{eq_rx_vector_i_non_iso} (i.e., with \ac{PN}) and derive the hybrid \ac{CRB} on location estimation \cite{hybrid_CRB_LSP_2008,Hybrid_ML_MAP_TSP}, a lower bound for a PN-aware receiver that estimates both deterministic parameters (position, gain and phase) and random PN with the statistics in \eqref{eq_pn_cov}. 
For \textit{isotropic} scattering, the signal model is given by \eqref{eq_rx_vector_i_iso}, with the unknown parameters $\etab = \left[ \pp, \alphab^\trp, \vartheta, \bxi^\trp \right]^\trp \in \rset{(N \Nrx + R+3)}{1} $. For \textit{non-isotropic} scattering, we have the signal model \eqref{eq_rx_vector_i_non_iso} and the unknown parameter vector $\etab = \left[ \pp^\trp, \gammab^\trp, \bxi^\trp \right]^\trp \in \rset{(N \Nrx + 2R+2)}{1} $. Using the prior distribution of the \ac{PN} $\bxi$ in \eqref{eq_pn_cov} and following similar steps to those in \cite[App.~A]{OFDM_PN_TSP_Exploitation_2023}, \cite{hybrid_CRB_LSP_2008}, the prior and observation related Fisher information matrices can be derived, yielding the hybrid \ac{CRB}.

\begin{figure*}[t]
    \centering

    \subfloat[Performance with respect to transmit power.%
        \label{fig_snr_sweep}]
    {%
        \includegraphics[width=0.32\textwidth]
        {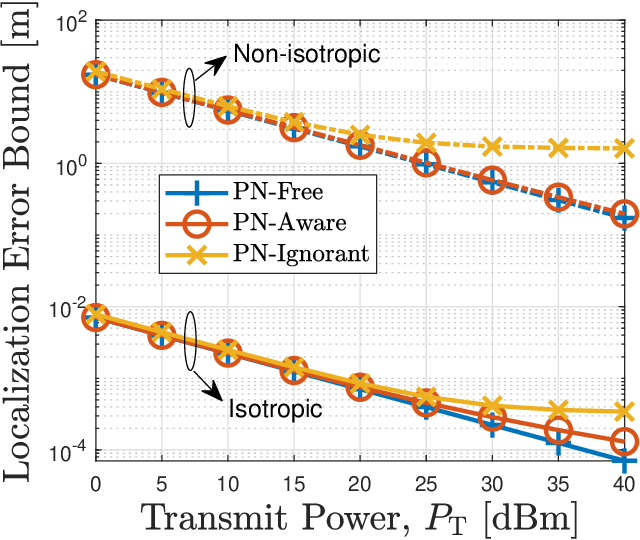}%
    }
    \hfill
    \subfloat[Performance with respect to \ac{LO} quality.%
        \label{fig_3db_sweep}]
    {%
        \includegraphics[width=0.32\textwidth]
        {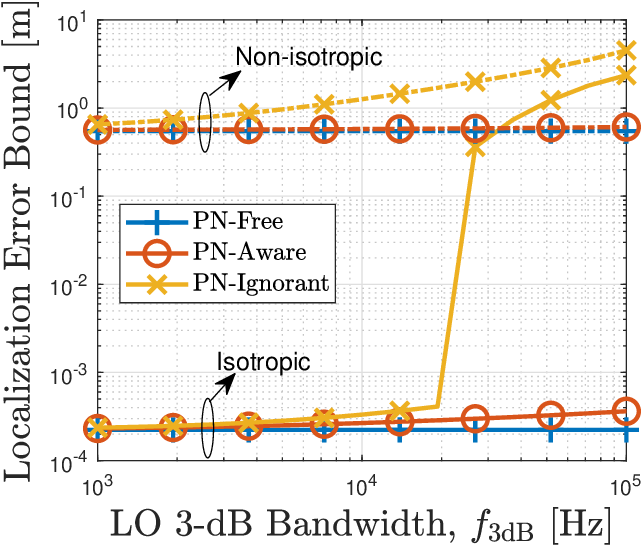}%
    }
    \hfill
    \subfloat[Impact of AP deployment scale factor $s$.
        \label{fig_loc_sweep}]
    {%
        \includegraphics[width=0.32\textwidth]
        {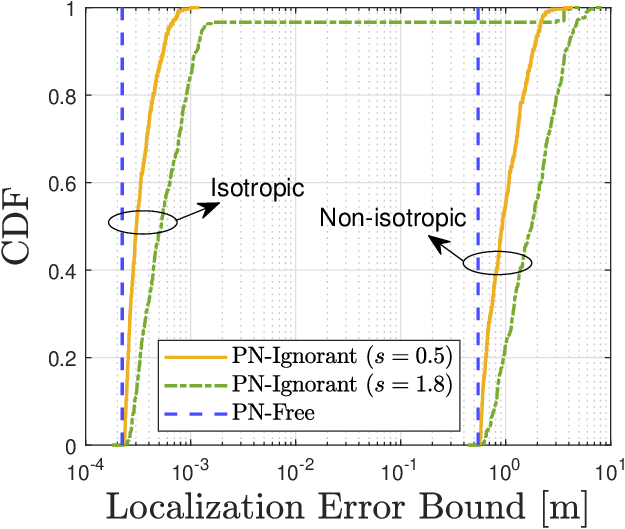}%
    }

    \caption{Bounds on localization accuracy under various parameter settings.}
    \label{fig_parameter_sweeps}
\end{figure*}

\subsection{PN-Ignorant Bound: Misspecified \ac{CRB}}
Unlike the previous case, we consider a PN-ignorant receiver that ignores the presence of \ac{PN} and estimates target location by assuming the ideal, PN-free observation model in \eqref{eq_rx_vector_i_iso_crb} or \eqref{eq_rx_vector_i_non_iso_crb}. The goal is to quantify performance loss under \ac{PN}, resulting from mismatch between the true generative models in \eqref{eq_rx_vector_i_iso} or \eqref{eq_rx_vector_i_non_iso} and the assumed models in \eqref{eq_rx_vector_i_iso_crb} or \eqref{eq_rx_vector_i_non_iso_crb}, by employing the \ac{MCRB} machinery \cite{Fortunati2017}. 

For \textit{isotropic} scattering, the true model with \ac{PN} is given by \eqref{eq_rx_vector_i_iso} for a given \ac{PN} realization $\bxi$ while the assumed model without \ac{PN} is \eqref{eq_rx_vector_i_iso_crb}. The pseudo-true parameter, which minimizes the \ac{KL} divergence between the true and assumed \ac{PDF} of the observations in \eqref{eq_rx_vector_i_iso} and \eqref{eq_rx_vector_i_iso_crb} \cite{Fortunati2017}, can be computed as \cite[Sec.~III-C]{PN_OFDM_ISAC_RadarConf_2023}
\begin{align} \label{eq_pt_min}
		(\pp_0, \alphab_0, \vartheta_0) &= \arg \min_{\pp, \alphab, \vartheta} ~ \norm{\mubbar - \mub(\pp, \alphab, \vartheta)}^2 \,,
\end{align}
where $\mubbar = [\mubbar_1^\trp, \ldots, \mubbar_\Nrx^\trp]^\trp $ with $\mubbar_i \triangleq \alphabar_i e^{j\varthetabar}
    e^{-j2\pi f_c\tau_i(\ppbar)} \ww_i \odot
     \uu_i(\ppbar)$ and $\ppbar, \{\alphabar_i\}_{i=1}^\Nrx, \varthetabar$ denoting the true values of the unknown parameters $\pp, \{\alpha_i\}_{i=1}^\Nrx, \vartheta$, and $\mub(\pp, \alphab, \vartheta) = [\mub_1^\trp(\pp, \alphab, \vartheta), \ldots, \\§ \mub_\Nrx^\trp(\pp, \alphab, \vartheta)]^\trp $ with $\mub_i(\pp, \alphab, \vartheta) \triangleq \alpha_i e^{j\vartheta}
    e^{-j2\pi f_c\tau_i(\pp)} 
     \uu_i(\pp)$. After compressing away $\alphab$ and $\vartheta$ in \eqref{eq_pt_min} via their closed-form estimates as a function of $\pp$, the concentrated cost function reduces to a 2-D search over $\pp$:
     \begin{align} \label{eq_pt_min2}
		\pp_0 &= \arg \max_{\pp} ~ \sum_{i=1}^{\Nrx} \abss{ \beta_i(\pp) }^2 + \absss{\sum_{i=1}^{\Nrx}\beta_i(\pp)^2} \,,
\end{align}
where $\beta_i(\pp) \triangleq e^{j2\pi f_c\tau_i(\pp)} \uu_i^\herm(\pp) \mubbar_i$.

For \textit{non-isotropic} scattering, the true and assumed models are given by \eqref{eq_rx_vector_i_non_iso} and \eqref{eq_rx_vector_i_non_iso_crb}, respectively. The pseudo-true parameter optimization over $\pp_0, \gammab_0$ can be formulated similarly to \eqref{eq_pt_min} and the compressed cost function over $\pp$ becomes
\begin{align} \label{eq_pt_min3}
		\pp_0 &= \arg \max_{\pp} ~ \sum_{i=1}^{\Nrx} \abss{ \zeta_i(\pp) }^2   \,,
\end{align}
where $\zeta_i(\pp) \triangleq  \uu_i^\herm(\pp) (\gammabar_i \ww_i \odot  \uu_i(\ppbar))$, with $\gammabar_i$ denoting the true value of $\gamma_i$. We note that the coherent sum in \eqref{eq_pt_min2} disappears in \eqref{eq_pt_min3} due to non-isotropic scattering preventing phase-coherent processing. In both cases, after obtaining the pseudo-true parameters via \eqref{eq_pt_min2} or \eqref{eq_pt_min3}, the derivations of the \ac{MCRB} and the \ac{LB} follow those in \cite[Sec.~III-C]{PN_OFDM_ISAC_RadarConf_2023}.

\section{Simulation Results}
For simulations, we consider the default parameters in Table~\ref{tab_parameters} and the system geometry in Fig.~\ref{fig_dep}. In \eqref{eq_rx_vector_i}, we employ QPSK modulation for $\xx$ and generate the channel gains using the bistatic radar equation \cite[Eq.~(2.9)]{Scheer2010RadarEquation} as $\alpha_i = \lambda \sqrt{\Pt \gaint \gainr \sigmarcsi}/ ((4 \pi)^{3/2} \distt \distri)$,  where $\distt = \norm{\ptx - \pp}$, $\distri = \norm{\prxi - \pp}$, $\lambda = c/\fc$ is the wavelength, $\Pt$ is the transmit power, $\gaint, \gainr$ are the transmit and receive antenna gains, and $\sigmarcsi$ denotes the bistatic \ac{RCS} of the target with respect to RX $i$. In \eqref{eq_rx_vector_i_iso} and \eqref{eq_rx_vector_i_non_iso}, a single random variable for $\vartheta$ and $\Nrx$ random variables for $\{\vartheta_i\}_{i=1}^{\Nrx}$ are drawn randomly from $\mathcal{U}[0, 2\pi)$, respectively. In addition, we set $\sigma^2 = N_0 N \deltaf N_F$ with $N_0$ and $N_F$ representing the noise \ac{PSD} and the noise figure, respectively. To evaluate multistatic sensing performance under \ac{PN}, we investigate the three bounds in Sec.~\ref{sec_bounds} along various sweep dimensions (namely, transmit power, \ac{LO} quality and deployment). The PN-ignorant bound (i.e., the \ac{LB}) is computed by averaging over $300$ PN realizations\footnote{We numerically solve \eqref{eq_pt_min2} over a disc of radius $5$ m centered at the true position using a coarse-to-fine grid search followed by constrained \ac{SQP} refinement. For \eqref{eq_pt_min3}, we apply grid-initialized constrained \ac{SQP} over the AP bounding rectangle expanded by a factor of $1.25$ about its center.}.

\begin{table}[t]\footnotesize
\caption{Simulation Parameters}
\centering
\begin{tabular}{ll}
\hline
Carrier frequency & $\fc = 3.5$ GHz\\
Subcarriers / spacing & $N=64$, $\deltaf=60$ kHz\\
Useful / CP duration & $T=16.66~\mu$s, $\Tcp=2.14~\mu$s\\
Oscillator quality & $\fdb =20$ kHz\\
Transmit power / antenna gains & $\Pt = 30$ dBm, $\gaint = \gainr = 4$ dBi \\
Target \ac{RCS} & $\sigmarcsi = 10$ dBsm, $\forall i$\\
Noise figure / \ac{PSD} & $N_F = 8$ dB, $N_0 = -174$ dBm/Hz \\
\hline
\end{tabular}
\label{tab_parameters}
\end{table}

\begin{figure}[t]
	\centering
	\includegraphics[width=0.4\linewidth]{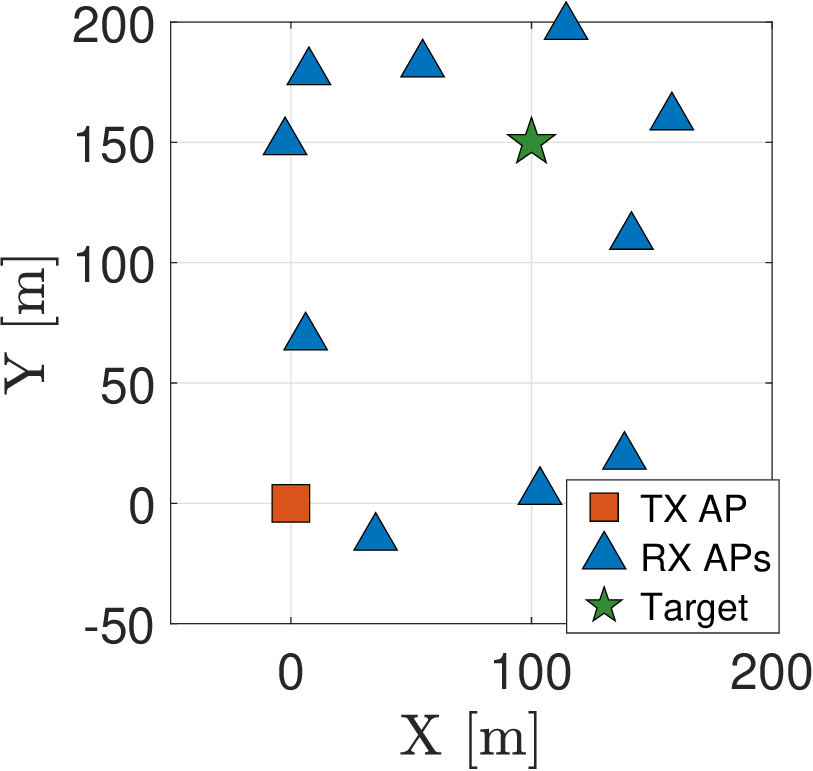}
	\vspace{0pt}
	\caption{AP deployment and target location.} 
	\label{fig_dep}
\end{figure}

\subsection{Impact of Transmit Power}
Fig.~\ref{fig_snr_sweep} reports the localization performance as a function of \ac{TX} power $\Pt$. For both target models, the PN-ignorant bound reaches a performance floor at high $\Pt$, where \ac{PN} dominates \ac{AWGN} and model mismatch becomes the main limiting factor. For non-isotropic targets, the PN-aware receiver can recover a larger portion of the \ac{PN}-induced performance loss than for isotropic targets. This is because the carrier phase information available for isotropic targets is inherently more sensitive to residual phase perturbations caused by \ac{PN}. 

\subsection{Impact of Oscillator Quality}
Fig.~\ref{fig_3db_sweep} shows the bounds as the \ac{LO} bandwidth $\fdb$ increases, corresponding to worsening \ac{LO} quality, and Table~\ref{tab_rare_tail} complements the figure by providing the percentage of PN realizations resulting in PN-ignorant bound values above $1$ m for isotropic scattering. The PN-aware bound remains close to the PN-free baseline for non-isotropic scattering, while isotropic scattering exhibits a modest increase. The effect is more pronounced when \ac{PN} is ignored. The non-isotropic bound deteriorates gradually, whereas the isotropic bound rises abruptly at larger $\fdb$ values and approaches the non-isotropic bound. As seen from Table~\ref{tab_rare_tail}, this jump results from PN-induced shifts of the pseudo-true position $\pp_0$ in \eqref{eq_pt_min2} away from the true position $\pp$, creating outliers increasing in percentage with $\fdb$. This implies that severe PN may erase the carrier phase advantage of isotropic targets.

\begin{table}[t]
\centering
\caption{Percentage of PN realizations yielding above-meter PN-ignorant bound values for isotropic scattering in
Fig.~\ref{fig_3db_sweep}.}
\label{tab_rare_tail}

\footnotesize
\setlength{\tabcolsep}{2.5pt}
\renewcommand{\arraystretch}{1.2}

\begin{tabular}{
    >{\centering\arraybackslash}m{0.25\columnwidth}|
    *{4}{>{\centering\arraybackslash}m{0.14\columnwidth}}
}
\hline
$\fdb$
    & $19.3$ kHz
    & $37.2$ kHz
    & $71.2$ kHz
    & $100$ kHz
\\
\hline
\shortstack[c]{Percentage of \\ PN realizations}
    & 0\%
    & 3.7\%
    & 21\%
    & 35.7\%
\\
\hline
\end{tabular}
\end{table}

\subsection{Impact of Deployment}
To investigate the impact of the location-dependent PN statistics in \eqref{eq_pn_cov}, we radially scale all AP positions about the fixed target by a deployment scale factor $s$, with $s=1$ denoting the original deployment. We keep the channel gains at their $s=1$ values, thereby keeping the PN-free bound fixed\footnote{Radial scaling about the
fixed target preserves the gradients of the bistatic delays with respect to the target location.} while varying the PN statistics through the bistatic delays. Fig.~\ref{fig_loc_sweep} shows the empirical \ac{CDF} of the PN-ignorant bound across PN realizations for $s=0.5$ (contraction) and $s=1.8$ (expansion), together with the PN-free baseline. For both scattering models, expansion shifts the CDF toward larger bound values, indicating more severe PN-induced degradation and confirming location-dependent nature of PN. Specifically, the longer bistatic delays for $s=1.8$ increase the differential PN variance in \eqref{eq_dpn_stat}, consistent with the location-dependent PN effects illustrated in Fig.~\ref{fig_pn_var_loc}. For isotropic scattering, expansion also produces a distinct upper tail with meter-level bound values. This suggests that, for some PN realizations, the carrier phase localization advantage is effectively lost, yielding bounds comparable to those under non-isotropic scattering.

\section{Conclusion}

We have characterized the location-dependent PN statistics in shared-\ac{LO} multistatic OFDM sensing and evaluated localization accuracy using PN-free, PN-aware and PN-ignorant bounds. The results reveal error floors under PN-ignorant processing and a loss of the accuracy gain provided by carrier phase information under severe PN for isotropic targets, while the PN-aware bounds indicate significant potential for mitigation. The impact of location-dependent PN statistics has been shown via simulations under different deployments. Future research will consider \acp{PLL} and joint location-velocity estimation with multiple symbols.



\balance
\bibliographystyle{Sub/IEEEbib}
\bibliography{IEEEabrv,Sub/isac}
\end{document}